\newtheorem{proposition}{Proposition}
\newtheorem{proposition?}{Proposition?}
\newtheorem{theorem}{Theorem}
\theoremstyle{definition}
\definecolor{agcol}{rgb}{0.9, 1.0, 0.9}
\definecolor{jkcol}{rgb}{0.5, 0.9, 0.8}
\newcommand{\bra}[1]{\left\langle#1\right\rvert}
\newcommand{\ket}[1]{\left\lvert#1\right\rangle}
\begin{document}
\title[]{Characterizing Fisher information of quantum measurement}

\author{Rakesh Saini}
\email[]{rakesh7698saini@gmail.com}
\affiliation{Centre for Engineered Quantum Systems, Department of Physics and Astronomy, Macquarie University, Sydney NSW 2113, Australia}
\author{Jukka Kiukas}
\affiliation{Department of Mathematics, Aberystwyth University, Aberystwyth SY23 3BZ, United Kingdom}
\author{Daniel Burgarth}
\affiliation{Physics Department, Friedrich-Alexander Universit\"at of Erlangen-Nuremberg, Staudtstr. 7, 91058 Erlangen, Germany}
\affiliation{Department of Mathematics, Aberystwyth University, Aberystwyth SY23 3BZ, United Kingdom}
\affiliation{Centre for Engineered Quantum Systems, Department of Physics and Astronomy, Macquarie University, Sydney NSW 2113, Australia}
\author{Alexei Gilchrist}
\affiliation{Centre for Engineered Quantum Systems, Department of Physics and Astronomy, Macquarie University, Sydney NSW 2113, Australia}

\date{\today}

\begin{abstract} 
Informationally complete measurements form the foundation of universal quantum state reconstruction, while quantum parameter estimation is based on the local structure of the manifold of quantum states. Here we establish a general link between these two aspects, in the context of a single informationally complete measurement, by employing a suitably adapted operator frame theory. In particular, we bound the ratio between the classical and quantum Fisher information in terms of the spectral decomposition of the associated frame operator, and connect these bounds to the optimal and least optimal directions for parameter encoding. The geometric and operational characterization of information extraction thus obtained reveals the fundamental tradeoff imposed by informational completeness on local quantum parameter estimation.
\end{abstract}
\maketitle

\emph{Quantum state tomography} plays a central role in quantum information science, underpinning key applications in quantum sensing, communication, and metrology~\cite{2004Gio336,1969Hel252,2009Par137,2011Gio229,2014Tot006,2016Szc639,2017Deg002,2020Liu001}.
A crucial factor determining the efficiency of tomography—or more specifically, the accuracy of parameter estimation—is the choice of quantum measurement, mathematically described by a \emph{positive operator-valued measure} (POVM)~\cite{1996Buschquantum,2011Hol001}.
For a given parameter-encoded quantum state and measurement, the amount of information that can be extracted about the parameter is quantified by the \emph{classical Fisher information}.

If the parameter is encoded along a specific direction in the state space, one can, in principle, optimize over all possible measurements to find the POVM that extracts the maximal information about that parameter and its encoding. The corresponding ultimate limit is set by the \emph{quantum Fisher information}, which directly determines the \emph{quantum Cramér–Rao bound}—the fundamental lower bound on the variance (mean-squared error) of any unbiased estimator~\cite{2009Par137}.
This bound is achieved when the measurement consists of projectors onto the eigenvectors of the \emph{symmetric logarithmic derivative} (SLD), which is defined by the parametric derivative of the quantum state.

Importantly, the \emph{quantum Fisher information} depends only on the specific quantum state and the encoding of specific parameters. Thus, the optimal POVM for one parameterized family of states may be completely uninformative for another. To overcome this limitation, it is natural to consider \emph{informationally complete POVMs} (IC-POVMs)—measurements whose elements span the entire space of linear operators. IC-POVMs are particularly valuable because they allow estimation of parameters encoded in \emph{any} direction, given sufficient measurement statistics, making them ideally suited for universal parameter estimation tasks~\cite{2004Dar487}.

Because IC-POVMs form spanning sets, their mathematical structure is naturally described by \emph{frame theory}, which generalizes the concept of a basis to redundant spanning sets~\cite{2018waldronAITFTF}.
Frame theory provides a unified language to analyze informational completeness, reconstruction stability, and robustness of quantum measurements.

In this Letter, we present a unified framework connecting local parameter estimation, universal state reconstruction, Fisher information, and frame theory. We show that for an IC-POVM and a full-rank quantum state—whether with no prior information (maximally mixed) or with prior information (a specific state)—the ratio between classical and quantum Fisher information is fully characterized by the spectral properties of the corresponding \emph{frame operator}. Since IC-POVMs necessarily fail to saturate the quantum Cramér–Rao bound, this ratio is always strictly less than one. We derive tight upper and lower bounds on this ratio, determined by the extremal eigenvalues of the frame operator, and identify the associated eigenvectors, which correspond to the \emph{best} and \emph{worst} directions for parameter encoding.

Recent work on \emph{universally Fisher-symmetric informationally complete} measurements has shown that certain IC-POVMs are equally informative in all parameter directions~\cite{2018zhu404,2016Lin402}.
Our results encompass these as special cases, providing a broader and more general framework. In particular, the universal Fisher-symmetry condition emerges naturally as a limiting case within our spectral characterization. The broader goal of our work, however, is to provide a complete geometric and operational characterization of the information-extraction capabilities of any informationally complete measurements and establish fundamental limits on their optimal performance.

\emph{Local quantum estimation.}---
Consider a finite-dimensional Hilbert space and a local quantum statistical model $\theta \mapsto \rho_\theta$ defined in a neighborhood of $\theta = 0$, where the reference state $\rho_0 = \rho$ has full rank and is known. 
The direction of parameter change is given by the \emph{symmetric logarithmic derivative} (SLD) $L$, defined by
\begin{equation}
\frac{d\rho_\theta}{d\theta}\Big|_{\theta = 0} = \frac{1}{2}(L\rho + \rho L).
\end{equation}

When estimating $\theta$ using a single POVM $\mathsf{E} = \{E_i\}_{i \in I}$, the corresponding \emph{classical Fisher information} quantifying the optimal estimator precision is
\begin{equation}
I_C[\mathsf{E}, \rho] 
= \sum_{i \in I} \frac{\big[\mathrm{Re}\,\mathrm{Tr}(\rho L E_i)\big]^2}{\mathrm{Tr}(\rho E_i)}.
\end{equation}
Maximizing this quantity over all POVMs yields the \emph{quantum Fisher information} \(I_Q[\rho] = \mathrm{Tr}[\rho L^2]\),
which determines the ultimate achievable precision through the \emph{quantum Cramér--Rao bound}~\cite{1969Hel252,2011Hol001}:
\begin{equation}
\mathrm{Var}(\hat{\theta}) \ge \frac{1}{I_C[\mathsf{E}, \rho]} \ge \frac{1}{I_Q[\rho]},
\end{equation}
valid for any unbiased estimator $\hat{\theta}$. Expanding the state $\rho_\theta$ locally around \(\theta=0\) gives
\begin{equation}
\rho_\theta = \rho + \frac{\theta}{2}(L\rho + \rho L) + O(\theta^2),
\end{equation}
and since $\rho$ is full rank, the first order approximation of $\rho_\theta$ remains positive and trace one in a neighborhood of $\theta = 0$, reflecting the fact that the SLD completely determines the local model around the reference state $\rho$.

\emph{Frame-theoretic formulation.}---
To connect local estimation with the structure of quantum measurements, we equip the real vector space $\mathcal{M}$ of Hermitian operators with the \emph{state-dependent inner product}
\(
\langle A, B \rangle_\rho = \mathrm{Re}\,\mathrm{Tr}[\rho AB].
\)
Given an IC-POVM $\mathsf{E} = \{E_i\}_{i\in I}$, we define a linear map $V : \mathbb{R}^{|I|} \to \mathcal{M}$ by
\begin{equation}\label{frame}
V e_i = \langle E_i, \mathbb{I} \rangle_\rho^{-1/2} E_i, \qquad i \in I
\end{equation}
where \(e_i\) is the standard basis, and denote its adjoint (with respect to the above $\rho$-dependent inner product) by $V^{\dagger}$. Since $\mathsf E$ is informationally complete, the operators in \eqref{frame} span $\mathcal M$, and hence constitute a \emph{frame} \cite{2018waldronAITFTF}, with the associated \emph{frame operator}
\begin{equation}
\mathcal{F} := VV^{\dagger}.
\end{equation}
We stress that the frame operator \(\mathcal{F}\) depends on both the state \(\rho\) and the IC-POVM \(\mathsf{E}\).
Its action is given by
\begin{equation}
\langle A,\mathcal{F}(B) \rangle_\rho 
= \sum_{i \in I} \frac{\langle A, E_i \rangle_\rho\,\langle E_i, B \rangle_\rho}{\langle E_i, \mathbb{I} \rangle_\rho}
\end{equation}
for all $A, B \in \mathcal{M}$. Crucially, if $L$ is the SLD of some local parametric model around $\rho$, we obtain
\begin{equation}
I_C[\mathsf{E}, \rho] = \langle L,\mathcal{F}(L) \rangle_\rho,
\end{equation}
while $I_Q[\rho] = \langle L, L \rangle_\rho$. Thus, the ratio $I_C/I_Q$ can be expressed in terms of $\mathcal{F}$.

\emph{Frame operator properties.}---
We now summarize the essential spectral properties of the frame operator $\mathcal{F}$.

\begin{proposition}\label{framebounds}
Let $\rho$ be a full-rank quantum state and $\mathsf{E}$ an IC-POVM. Then the following hold:
\begin{enumerate}
\item[(a)] $\mathcal F$ is a positive Hermitian operator in the real Hilbert space $\mathcal M$ with inner product $\langle \cdot,\cdot\rangle_\rho$.
\item[(b)] All eigenvalues of $\mathcal{F}$ lie within the interval $[0,1]$.
\item[(c)] The top eigenvalue equals $1$, and $\mathbb I$ is the only corresponding eigenvector.
\end{enumerate}
\end{proposition}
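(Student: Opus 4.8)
The plan is to handle (a) by elementary frame theory, reduce (b) to the quantum Cram\'er--Rao bound already recorded above, and derive (c) by examining when that bound is saturated. Beginning with \emph{(a)}, I would first note that $(\mathcal M,\langle\cdot,\cdot\rangle_\rho)$ is a genuine (finite-dimensional, hence complete) real inner-product space: symmetry follows from $\overline{\mathrm{Tr}[\rho AB]}=\mathrm{Tr}[\rho BA]$, and positive definiteness from $\langle A,A\rangle_\rho=\mathrm{Tr}[\rho A^2]=\|A\rho^{1/2}\|_{\mathrm{HS}}^2$, which vanishes only for $A=0$ since $\rho$ has full rank; the same argument gives $p_i:=\langle E_i,\mathbb{I}\rangle_\rho=\mathrm{Tr}[\rho E_i]>0$ for all $i$, so the normalisation defining $V$ is legitimate. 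Then $\mathcal F=VV^{\dagger}$ obeys $\langle A,\mathcal F B\rangle_\rho=\langle V^{\dagger}A,V^{\dagger}B\rangle$, manifestly symmetric in $A,B$ and nonnegative for $B=A$, hence positive and Hermitian (indeed positive definite, since informational completeness makes $V$ surjective and $V^{\dagger}$ injective).

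For \emph{(b)}, the displayed action of $\mathcal F$ gives, for Hermitian $A$,
\[
\langle A,\mathcal F(A)\rangle_\rho=\sum_{i\in I}\frac{\big(\mathrm{Re}\,\mathrm{Tr}[\rho A E_i]\big)^2}{p_i},
\]
which is precisely $I_C[\mathsf E,\rho]$ for the local model with SLD $L=A$, while $\langle A,A\rangle_\rho=\mathrm{Tr}[\rho A^2]=I_Q[\rho]$; thus $\mathcal F\le\mathbb{I}$ is exactly $I_C\le I_Q$, which with (a) confines the spectrum to $[0,1]$. For a self-contained derivation I would write $\mathrm{Tr}[\rho A E_i]=\langle E_i^{1/2}A\rho^{1/2},E_i^{1/2}\rho^{1/2}\rangle_{\mathrm{HS}}$, apply $(\mathrm{Re}\,z)^2\le|z|^2$ and Hilbert--Schmidt Cauchy--Schwarz to obtain $(\mathrm{Re}\,\mathrm{Tr}[\rho AE_i])^2\le p_i\,\mathrm{Tr}[\rho AE_iA]$, and sum over $i$ using $\sum_iE_i=\mathbb{I}$ and $\mathrm{Tr}[\rho A\mathbb{I}A]=\mathrm{Tr}[\rho A^2]$.

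For \emph{(c)}, since $\mathcal F(\mathbb{I})=\sum_i(p_i/p_i)E_i=\mathbb{I}$, the value $1$ is attained and, by (b), is the top eigenvalue. For uniqueness, if $\mathcal F(A)=A$ then the chain in (b) is saturated, so for each $i$ the Cauchy--Schwarz step is tight: $E_i^{1/2}A\rho^{1/2}\propto E_i^{1/2}\rho^{1/2}$, whence (invertibility of $\rho^{1/2}$) $E_i^{1/2}A=c_iE_i^{1/2}$ and $E_iA=c_iE_i$. Sandwiching this identity with its adjoint gives $(c_i-\bar c_i)E_i^2=0$, so $c_i\in\mathbb R$ and $AE_i=c_iE_i$; hence $\mathrm{ran}(E_i)$ lies in the $c_i$-eigenspace of $A$ and $E_i=P_{c_i}E_iP_{c_i}$, with $P_{c_i}$ the associated spectral projection. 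So every $E_i$, and by linearity every element of $\mathrm{span}\{E_i\}$, commutes with all spectral projections of $A$; but informational completeness forces $\mathrm{span}\{E_i\}=\mathcal M$, which contains Hermitian operators not commuting with those projections as soon as $A$ has two distinct eigenvalues. Therefore $A$ has a single eigenvalue, i.e.\ $A\propto\mathbb{I}$. The routine content is (a) and the inequality in (b); the real obstacle is this uniqueness step — converting saturation of Cauchy--Schwarz for all POVM elements into the support condition $E_i=P_{c_i}E_iP_{c_i}$ and playing it against informational completeness — with the only delicate point being to confirm the proportionality constants $c_i$ are real, which the sandwiching argument settles.
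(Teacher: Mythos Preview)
Your argument is correct. Parts (a) and (b) essentially coincide with the paper's: the same Cauchy--Schwarz estimate $(\mathrm{Re}\,\mathrm{Tr}[\rho AE_i])^2\le p_i\,\mathrm{Tr}[\rho AE_iA]$ summed over $i$. The genuine divergence is in the uniqueness part of (c). After obtaining $E_iA=\gamma_iE_i$ with real $\gamma_i$, the paper tests the relation $\mathcal F(A)=A$ against $\rho^{-1}E_j$ to derive a stochastic averaging identity $\gamma_j=\sum_i p(i\mid j)\,\gamma_i$ with $p(i\mid j)=\mathrm{Tr}[E_jE_i]/\mathrm{Tr}[E_j]$; a maximum argument then shows that either all $\gamma_i$ coincide or the outcome set splits into two nonempty blocks with mutually Hilbert--Schmidt-orthogonal effects, and a separate superposition-state construction rules out such a partition for any IC-POVM. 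Your route bypasses both steps: from $E_iA=c_iE_i=AE_i$ you read off that $\mathrm{ran}(E_i)$ lies in a single eigenspace of $A$, so every $E_i$---and hence, by informational completeness, every Hermitian operator---commutes with all spectral projections of $A$, forcing $A$ to be scalar. This is shorter and uses the spanning property of the IC-POVM in one stroke; the paper's version instead isolates an intermediate structural fact (no orthogonal bipartition of the effects of an IC-POVM) that is of independent interest but not needed for the present statement.
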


\begin{proof} It is easy to check that $\mathcal F$ is Hermitian. Now fix $A \in \mathcal{M}$. For any positive semidefinite $E \in \mathcal{M}$,
\begin{align}\label{cauchy}
\langle A, E\rangle_\rho^2
&= (\mathrm{Re}\,\mathrm{Tr}[\rho AE])^2 \nonumber\\
&\le|{\rm Tr}[\rho AE]|^2
= |{\rm Tr}[\sqrt{\rho}A\sqrt{E}\sqrt{E}\sqrt{\rho}]|^2 \nonumber \\
&\le {\rm Tr}[\sqrt{\rho}E\sqrt{\rho}]\,{\rm Tr}[\sqrt{\rho}AEA\sqrt{\rho}] \nonumber\\
&= \langle E, \mathbb{I}\rangle_\rho\,\langle AEA, \mathbb{I}\rangle_\rho
\end{align}
by the Cauchy--Schwarz inequality. Hence,
\begin{align}\label{Fbound}
\langle A,\mathcal{F}(A) \rangle_\rho 
&= \sum_{i \in I} \frac{\langle A, E_i\rangle_\rho^2}{\langle E_i, \mathbb{I}\rangle_\rho}
\le \sum_{i \in I} \langle A E_i A, \mathbb{I}\rangle_\rho
= \langle A, A\rangle_\rho,
\end{align}
so all eigenvalues of $\mathcal{F}$ lie in $[0,1]$. Furthermore, since $\langle \mathbb{I},\mathcal{F}(\mathbb{I})\rangle_\rho = \langle \mathbb{I}, \mathbb{I}\rangle_\rho$, the operator $\mathbb{I}$ is an eigenvector with eigenvalue $1$. Furthermore, $\mathbb{I}$ is the \emph{only} eigenvector corresponding to $\lambda = 1$ (shown in Supplemental Material for completeness).  
 
\end{proof}

\begin{proposition}
Each eigenvalue $\lambda \neq 1$ of the frame operator $\mathcal{F}$ corresponds to the Fisher information of some local quantum model $\rho_\theta$ around $\rho$, and the corresponding eigenvectors lie in $\mathcal M_0^\rho$ (defined below).
\end{proposition}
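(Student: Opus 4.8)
The plan is to separate the statement into two independent claims: that every eigenvector for an eigenvalue $\lambda\neq 1$ sits in $\mathcal M_0^\rho := \{A\in\mathcal M : \langle A,\mathbb{I}\rangle_\rho = 0\}$, and that such a $\lambda$ is realized as a classical Fisher information. The first claim is immediate from Proposition~\ref{framebounds}: $\mathcal F$ is Hermitian on $(\mathcal M,\langle\cdot,\cdot\rangle_\rho)$ with $\mathbb{I}$ spanning the $\lambda=1$ eigenspace, so eigenvectors belonging to any eigenvalue $\lambda\neq 1$ are $\rho$-orthogonal to $\mathbb{I}$, i.e.\ lie in $\mathcal M_0^\rho$; equivalently, $\mathcal M_0^\rho$ is the $\mathcal F$-invariant orthogonal complement of $\mathrm{span}\{\mathbb{I}\}$. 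Note that $\langle A,\mathbb{I}\rangle_\rho = \mathrm{Tr}[\rho A]$ for Hermitian $A$, so $\mathcal M_0^\rho$ consists precisely of the operators that are traceless in the state $\rho$.

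For the second claim, fix an eigenpair $(\lambda,A)$ with $\lambda\neq 1$ and $A\in\mathcal M_0^\rho$, normalized so that $\langle A,A\rangle_\rho = \mathrm{Tr}[\rho A^2] = 1$. I would then write down the explicit linear model
\begin{equation}
\rho_\theta := \rho + \frac{\theta}{2}\,(A\rho + \rho A),
\end{equation}
and check that it is a legitimate local statistical model around $\rho$: it is Hermitian, $\mathrm{Tr}[\rho_\theta] = 1 + \theta\,\mathrm{Tr}[\rho A] = 1$ because $A\in\mathcal M_0^\rho$, and—since $\rho$ has full rank and hence a strictly positive least eigenvalue—$\rho_\theta\ge 0$ for all sufficiently small $|\theta|$. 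This is exactly the mechanism noted in the text right after the local expansion of $\rho_\theta$: a full-rank reference state makes the first-order (linear) model admissible, so the SLD alone determines it.

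It remains to identify the Fisher informations. By construction $d\rho_\theta/d\theta|_{\theta=0} = \tfrac12(A\rho+\rho A)$, and because the Lyapunov equation $\tfrac12(L\rho+\rho L)=X$ has a unique Hermitian solution when $\rho>0$, the SLD of this model is precisely $L=A$. Plugging into the frame-theoretic identities established above, $I_C[\mathsf{E},\rho] = \langle L,\mathcal F(L)\rangle_\rho = \langle A,\mathcal F(A)\rangle_\rho = \lambda\langle A,A\rangle_\rho = \lambda$, while $I_Q[\rho] = \langle A,A\rangle_\rho = 1$, so the classical Fisher information of $\rho_\theta$—and the ratio $I_C/I_Q$—equals the eigenvalue $\lambda$. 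I do not expect a genuine obstacle here: the only delicate points are positivity of $\rho_\theta$ for small $\theta$ (controlled by full rank of $\rho$) and well-posedness of the SLD (again guaranteed by $\rho>0$), and both are in force by hypothesis; one could additionally remark that since $\mathcal F$ is the frame operator of a genuine frame it is invertible, so in fact $\lambda\in(0,1)$ for every $\lambda\neq 1$.
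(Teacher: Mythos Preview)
Your proposal is correct and follows essentially the same route as the paper: use Proposition~\ref{framebounds} to place eigenvectors for $\lambda\neq 1$ in $\mathcal M_0^\rho=\{\mathbb I\}^\perp$, then build the linear model $\rho_\theta=\rho+\tfrac{\theta}{2}(A\rho+\rho A)$ using full rank of $\rho$ for positivity. Your version is in fact more explicit than the paper's, which does not spell out the normalization $\langle A,A\rangle_\rho=1$, the Lyapunov uniqueness argument identifying $L=A$, or the resulting computation $I_C=\lambda$; these additions, together with your remark that invertibility of the frame operator forces $\lambda\in(0,1)$, are all correct and welcome.
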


\begin{proof}
For any local model $\rho_\theta$, it follows from the definition that the SLD $L$ must satisfy ${\rm tr}[\rho L]=0$.

Therefore, any direction $L$ valid for parameter encoding must belong to the ``mean-zero'' subspace
\begin{equation}
\mathcal M_0^\rho := 
\{\, X \in \mathcal M \;|\; 
\langle \mathbb I, X \rangle_\rho = \mathrm{Tr}[\rho X] = 0 \,\}
= \{\mathbb I\}^\perp,
\end{equation}
where the orthogonal complement is with respect to $\langle \cdot , \cdot \rangle_\rho$.
Conversely, each operator $X \in \mathcal M_0^\rho$ generates a tangent direction
\begin{equation}
D_X(\rho) = \tfrac{1}{2}(X\rho + \rho X),
\end{equation}
which defines a valid local model
\begin{equation}
\rho_\theta = \rho + \theta D_X(\rho),
\end{equation}
because $\rho_\theta$ remains positive for sufficiently small $\theta$ due to $\rho$ having full rank. Finally, since the eigenspace corresponding to the eigenvalue $1$ of $\mathcal{F}$ is spanned by $\mathbb I$, and $\mathcal F$ is a Hermitian operator with respect to the inner product $\langle\cdot,\cdot\rangle_\rho$, it follows that every eigenvector $B$ of $\mathcal{F}$ with eigenvalue $\lambda \neq 1$ must be orthogonal to $\mathbb I$, and hence satisfy $B \in \mathcal M_0^\rho$. Thus, each such eigenvector corresponds to a valid local model $\rho_\theta$. 
\end{proof}

Having established the geometric and spectral structure of the frame operator, we are now ready to formulate and prove the central result of this work. The state $\rho$ that we choose or are given is interpreted as a \emph{reference state}, representing the \emph{a priori} information available about the system. Since local parameter estimation is performed in a neighborhood of this reference point, the quality of this prior knowledge directly influences how well we can understand the correlation between the measurement and the underlying quantum state. In the absence of any prior knowledge, we take the reference state to be the maximally mixed state, $\rho = \mathbb{I}/d$.

Given such a prior state $\rho$ and an IC-POVM $\mathsf E$, our objective is to determine the \emph{best} and \emph{worst} directions for parameter estimation—namely, the directions in operator space along which the encoded parameter yields the largest or smallest ratio of classical to quantum Fisher information.

\begin{theorem}
Given a full-rank quantum state $\rho$ and an IC-POVM $\mathsf{E}$, 
the ratio between the classical and quantum Fisher information satisfies
\begin{equation}
\lambda_{\min}\!\big(\mathcal{F}\big)
\le
\frac{I_C[\rho,\mathsf{E}]}{I_Q[\rho]}
\le
\lambda_{\max}^{(2)}\!\big(\mathcal{F}\big),
\end{equation}
where $\lambda_{\max}^{(2)}$ and $\lambda_{\min}$ denote, respectively, the second-largest and smallest eigenvalues of the frame operator $\mathcal{F}$.  
The corresponding eigenvectors identify the optimal and least informative directions for local parameter estimation.
\end{theorem}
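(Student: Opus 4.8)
The plan is to recognize that, by the identities recorded above, $I_C[\mathsf{E},\rho]=\langle L,\mathcal{F}(L)\rangle_\rho$ while $I_Q[\rho]=\langle L,L\rangle_\rho$, so that
\begin{equation}
\frac{I_C[\rho,\mathsf{E}]}{I_Q[\rho]}=\frac{\langle L,\mathcal{F}(L)\rangle_\rho}{\langle L,L\rangle_\rho}
\end{equation}
is precisely the Rayleigh quotient of the $\langle\cdot,\cdot\rangle_\rho$-Hermitian operator $\mathcal{F}$ evaluated at the SLD $L$. The entire statement then reduces to a Courant--Fischer (min--max) argument, but carried out on the subspace of \emph{admissible} encoding directions rather than on all of $\mathcal{M}$.

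First I would invoke the preceding Proposition: every SLD satisfies $\mathrm{Tr}[\rho L]=0$, i.e.\ $L\in\mathcal{M}_0^\rho=\{\mathbb{I}\}^\perp$, and conversely every nonzero $X\in\mathcal{M}_0^\rho$ is the SLD of a genuine local model around $\rho$. Hence the range of values of $I_C/I_Q$ is exactly the range of the Rayleigh quotient of $\mathcal{F}$ over nonzero vectors of $\mathcal{M}_0^\rho$. Next, since $\mathcal{F}$ is $\langle\cdot,\cdot\rangle_\rho$-Hermitian with $\mathbb{I}$ an eigenvector (eigenvalue $1$), the orthogonal complement $\mathcal{M}_0^\rho$ is $\mathcal{F}$-invariant, so $\mathcal{F}|_{\mathcal{M}_0^\rho}$ is a well-defined Hermitian operator on $\mathcal{M}_0^\rho$. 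Because $\mathbb{I}$ is the \emph{only} eigenvector for eigenvalue $1$ (Proposition~\ref{framebounds}(c)), the spectrum of $\mathcal{F}|_{\mathcal{M}_0^\rho}$ is that of $\mathcal{F}$ with a single copy of the eigenvalue $1$ removed; in particular its largest eigenvalue is $\lambda_{\max}^{(2)}(\mathcal{F})$ and its smallest is $\lambda_{\min}(\mathcal{F})$.

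The conclusion then follows from the Rayleigh--Ritz characterization applied on $\mathcal{M}_0^\rho$: for every nonzero $L\in\mathcal{M}_0^\rho$,
\begin{equation}
\lambda_{\min}(\mathcal{F})=\lambda_{\min}\!\big(\mathcal{F}|_{\mathcal{M}_0^\rho}\big)\le\frac{\langle L,\mathcal{F}(L)\rangle_\rho}{\langle L,L\rangle_\rho}\le\lambda_{\max}\!\big(\mathcal{F}|_{\mathcal{M}_0^\rho}\big)=\lambda_{\max}^{(2)}(\mathcal{F}).
\end{equation}
Tightness and the identification of the extremal directions come from choosing $L$ to be a normalized eigenvector of $\mathcal{F}$ for $\lambda_{\min}(\mathcal{F})$ or for $\lambda_{\max}^{(2)}(\mathcal{F})$: by the preceding Proposition such eigenvectors lie in $\mathcal{M}_0^\rho$, hence are realized as SLDs of valid local models, and they saturate the respective bound. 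These are then exactly the least informative and optimal directions for local parameter encoding.

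I expect the only genuine subtlety to be the bookkeeping around the eigenvalue $1$: one must use the uniqueness of $\mathbb{I}$ as its eigenvector (the point relegated to the Supplemental Material) to be sure that passing to $\mathcal{M}_0^\rho$ removes the eigenvalue $1$ \emph{entirely}, so that the relevant top eigenvalue is genuinely the \emph{second} largest eigenvalue of $\mathcal{F}$ rather than $1$ itself. Everything else—Hermiticity, invariance of the orthogonal complement, and the min--max inequality—is standard finite-dimensional linear algebra.
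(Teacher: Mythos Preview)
Your proposal is correct and follows essentially the same route as the paper: express $I_C/I_Q$ as the Rayleigh quotient of $\mathcal{F}$ at $L$, use the orthogonal decomposition $\mathcal{M}=\mathrm{span}\{\mathbb{I}\}\oplus\mathcal{M}_0^\rho$ together with the fact that admissible SLDs are exactly the nonzero elements of $\mathcal{M}_0^\rho$, and apply Rayleigh--Ritz on the $\mathcal{F}$-invariant subspace $\mathcal{M}_0^\rho$. Your explicit remark that uniqueness of $\mathbb{I}$ as the eigenvector for eigenvalue $1$ is what guarantees the top eigenvalue on $\mathcal{M}_0^\rho$ is genuinely $\lambda_{\max}^{(2)}$ rather than $1$ is a point the paper's proof uses implicitly via Proposition~\ref{framebounds}(c).
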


\begin{proof}
Let $\mathcal{F}$ be the frame operator associated to $\rho$ and $\mathsf E$. Let $L$ denote the symmetric logarithmic derivative (SLD) associated to some local parametric model around $\rho$. Then the ratio of classical to quantum Fisher information can be written as
\begin{equation}
\frac{I_C}{I_Q}
=\frac{\langle L,\mathcal{F} (L)\rangle_\rho}
{\langle L, L\rangle_\rho}.
\end{equation}
By Proposition~\ref{framebounds}, the ordered eigenvalues of the frame operator satisfy
\[
\lambda_1 > \lambda_2 =\lambda_{\max}^{(2)}  \ge \lambda_3 \ge \cdots \ge \lambda_{d^2} = \lambda_{\min} > 0,
\]
where $\lambda_1 = 1$ corresponds to the eigenvector $\mathbb{I}$.
Since $\mathcal F$ is Hermitian, the operator space $\mathcal{M}$ decomposes orthogonally as
\begin{equation}
\mathcal{M} = \mathrm{span}\{\mathbb{I}\} \oplus \mathcal{M}_0^\rho.
\end{equation}
Applying the Rayleigh--Ritz theorem on the invariant subspace $\mathcal{M}_0^\rho$ yields
\begin{align}
\max_{0 \neq X \in \mathcal{M}_0^\rho}
\frac{\langle X,\mathcal{F} (X)\rangle_\rho}{\langle X, X\rangle_\rho}
&= \lambda_{\max}^{(2)}(\mathcal{F}),\\
\min_{0 \neq X \in \mathcal{M}_0^\rho}
\frac{\langle X,\mathcal{F} (X)\rangle_\rho}{\langle X, X\rangle_\rho}
&= \lambda_{\min}(\mathcal{F}).
\end{align}

Since any admissible SLD $L$ lies in $\mathcal{M}_0^\rho$, it follows that
\begin{equation}
\lambda_{\min}(\mathcal{F})
\le
\frac{I_C[\rho, \mathsf{E}]}{I_Q[\rho]}
\le
\lambda_{\max}^{(2)}(\mathcal{F}),
\end{equation}
with equality if and only if $L$ is an eigenvector of $\mathcal{F}$ corresponding to $\lambda_{\max}^{(2)}$ (for the best direction) or $\lambda_{\min}$ (for the worst direction).
\end{proof}

\medskip
\noindent

\begin{figure}[h]
    \centering
    \includegraphics[width=0.5\textwidth]{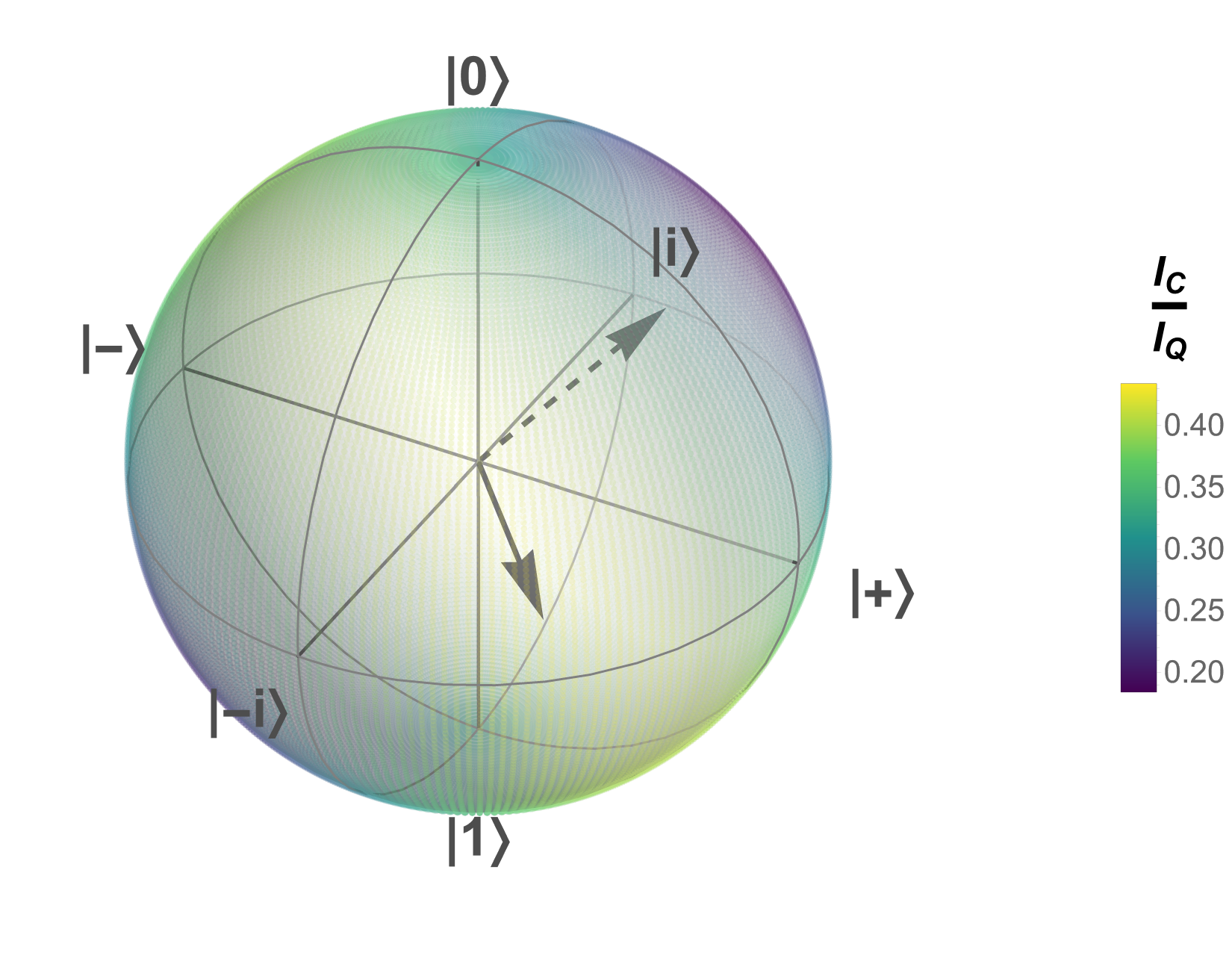}
    \caption{
    In this example, we consider a SIC POVM in a two-dimensional Hilbert space.  
    When no prior information about the quantum state is available (i.e.\ $\rho = \mathbb{I}/2$), the measurement is Fisher-symmetric, so every direction on the Bloch sphere yields the same ratio $I_C/I_Q$. We then choose an arbitrary mixed state ($    \rho = \frac{\mathbb{I}}{2} + 0.3\,\sigma_x + 0.25\,\sigma_y + 0.4\,\sigma_z$),and evaluate the local parameter-estimation performance for all directions on the Bloch sphere, colour-coding the resulting values of $I_C/I_Q$ (dark blue = lower ratio, yellow-green = higher ratio), as shown in the legend.  
    The solid arrow indicates the numerically obtained best estimation direction (maximal $I_C/I_Q$), while the dashed arrow marks the worst direction (minimal $I_C/I_Q$). For comparison, we compute the eigenvectors of the scaled frame operator $\mathcal{F}$ corresponding to its second-largest and smallest eigenvalues, which—according to our theory—identify the optimal and least informative parameter-encoding directions, respectively.  
    The analytical predictions align precisely with the numerical results, demonstrating full consistency with the theoretical characterization.
    }
    \label{Result}
\end{figure}

Hence, for a fixed pair $(\rho, \mathsf{E})$, the \emph{best attainable} ratio $I_C/I_Q$ is achieved when the SLD $L$ aligns with the eigenvector of the frame operator corresponding to the second-largest eigenvalue $\lambda_{\max}^{(2)}$.  
This eigenvector identifies the most informative direction for local parameter estimation, representing the optimal sensitivity achievable by the given measurement under the specified prior information.  
Knowing this full spectral profile allows one to optimize parameter-estimation performance according to experimental goals.  
If the objective is to maximize precision along a specific direction $L$, the POVM can be chosen so that the eigenvector for the second-largest eigenvalue aligns with $L$. Conversely, optimizing for the \emph{worst-case} scenario requires optimizing with respect to the lowest eigenvalue, which characterizes the least informative parameter direction.

In the absence of prior information, we have $\rho=\mathbb{I}/d$, and the state-dependent inner product $\langle A,B\rangle_\rho$ reduces to the Hilbert--Schmidt inner product and the frame operator $\mathcal{F}$ becomes the usual \emph{scaled frame operator} \cite{2025Sai539}.  
When all nontrivial eigenvalues of $\mathcal{F}$ are equal, every operator direction is equally informative, yielding ``Fisher-symmetric'' measurements considered in \cite{2006Sco507,2025Sai539};  
POVMs satisfying this uniformity condition form weighted complex projective $2$-designs \cite{2018zhu404}. To illustrate this regime, we consider a symmetric informationally complete (SIC)-POVM for which the second-largest and smallest eigenvalues of $\mathcal{F}$ coincide, resulting in isotropic parameter estimation performance in the absence of prior information.  
Introducing a nontrivial prior state $\rho$ breaks this symmetry and lifts the spectral degeneracy of the frame operator $\mathcal F$, thereby selecting distinct optimal and least informative parameter-encoding directions, as shown in Fig.~\ref{Result}.

\emph{Conclusion}--- For any IC-POVM, regardless of whether we have prior knowledge of the quantum state or not, the equality between classical and quantum Fisher information can never be achieved. Our main result establishes, in terms of the spectral structure of the associated frame operator, the fundamental limits on how much information on a parameter encoded in a quantum state an IC-POVM can extract. More specifically, the eigenvector corresponding to the second-largest eigenvalue of the frame operator identifies the optimal direction in which the parameter should be encoded to achieve the best estimation precision. Likewise, the smallest eigenvalue of the  frame operator determines the least informative direction and thus the worst-case estimation performance attainable with an IC-POVM. These results provide a complete geometric and operational characterization of information extraction in single-parameter quantum estimation using IC measurements. While IC-POVMs enable estimation along every direction in operator space, our analysis shows that the quality of estimation is determined entirely by the spectral structure of the frame operator. 
This frame-theoretic perspective opens a promising avenue for extensions to the multi-parameter setting, where measurement incompatibility, trade-offs between parameters, and generalized Fisher information bounds become central. Understanding how the spectrum of the frame operator constrains multi-parameter estimation may reveal new optimality criteria and deepen our understanding of the connection between informational completeness and quantum measurement incompatibility.

\begin{acknowledgments}

\end{acknowledgments}

\bibliography{references}  


\appendix

\section{Proof of the nondegeneracy of the maximum eigenvalue of the frame operator}\label{Identity_eigenvalue}

Let $\rho$ be a full-rank density matrix and $\mathsf E=\{E_i\}_{i\in I}$ be an informationally complete POVM. Then we have the corresponding frame operator $\mathcal{F}$, as defined in the main text. We have already shown that all eigenvalues of $\mathcal F$ lie in $[0,1]$, and $\mathbb I$ is an eigenvector of the maximum eigenvalue $1$.

Now, assume that $A$ is also an eigenvector of $\mathcal F$ corresponding to the eigenvalue $1$, that is, $\mathcal F(A)=A$. Then equality holds in \eqref{Fbound}, and hence in \eqref{cauchy} for $E=E_i$, for each $i$. Therefore, the following two conditions are met for each $i$:
\begin{enumerate}
    \item The first inequality is saturated when $(\mathrm{Re}\mathrm{Tr}[\rho E_i A])^2 = |\mathrm{Tr}[\rho E_i A]|^2$.
    \item The Cauchy–Schwarz inequality is saturated when the vectors are proportional to each other: $\sqrt{E_i} A \sqrt{\rho} = \gamma_i \sqrt{E_i} \sqrt{\rho}$ for scalars $\gamma_i$.
\end{enumerate}

Since $\rho$ is full rank and $E_i$ is positive semidefinite, we may multiply on the right by $(\sqrt\rho)^{-1}$, and on the left by $\sqrt{E_i}$, to get
\begin{equation}\label{eq:scalar-multiples}
    E_i A = \gamma_i E_i \quad \text{ for all } i\in I.
\end{equation}
Substituting \eqref{eq:scalar-multiples} into the first condition shows that the $\gamma_i$ are real.
%
Furthermore, expanding the definition of $\mathcal{F}$ in the relation $\langle \rho^{-1}E_j,\mathcal{F}(A) \rangle_\rho = \langle \rho^{-1}E_j,A \rangle_\rho$
gives
\[
\sum_i \frac{\mathrm{Re}[\mathrm{Tr}(\rho \rho^{-1}E_j E_i)]\, \mathrm{Re}[\mathrm{Tr}(\rho E_i A)]}{\mathrm{Re}[\mathrm{Tr}(\rho E_i)]} =  \mathrm{Re}[\mathrm{Tr}(\rho \rho^{-1} E_j A)].
\]
Using \eqref{eq:scalar-multiples} this simplifies to
\[
\gamma_j\,\mathrm{Tr}[E_j] = \sum_i \mathrm{Tr}(E_j E_i)\, \gamma_i.
\]
Note that ${\rm tr}[E_j]>0$ for all $j$, as otherwise $E_j$ would be zero for some $j$. Hence, we may define
$$p(i|j) := \frac{\mathrm{Tr}[E_j E_i]}{\mathrm{Tr}[E_j]}= \frac{\mathrm{Tr}[\sqrt{E_j} E_i\sqrt{E_j}]}{\mathrm{Tr}[E_j]}\ge 0;$$ then $\sum_i p(i|j)=1$, and the above relation reads
\begin{equation}\label{eq:average}
    \gamma_j = \sum_i p(i|j)\,\gamma_i 
    \quad \text{for each }j\in I.
\end{equation}

Let $j$ be such that $\gamma_{\max} = \max\{\gamma_i\}=\gamma_j$. Since \eqref{eq:average} is an average over the probability distribution $p(i|j)$, for each $i$ we have either $\gamma_i=\gamma_{\max}$ or $p(i|j) = 0$. Notice that the latter case means that $E_i$ is orthogonal to $E_j$. We will show that this case cannot occur.

More generally, due to the informational completeness of $\mathsf E$, the outcome set $I$ cannot be partitioned into two nonempty sets $S_1$ and $S_2$ whose effects are orthogonal. To see this, assume for the contrary that such a partition exists. Fix $k\in S_1$, $l\in S_2$, and let $\psi_1$ and $\psi_2$ be (any) eigenvectors of $E_k$ and $E_l$, respectively. Then, for each $i\in S_1$ we have $\|\sqrt{E_i}\psi_2\|^2=\langle \psi_2|E_i|\psi_2\rangle \leq c \  {\rm Tr}[E_iE_l]=0$ for some $c\geq 0$, so $\sqrt{E_i}\psi_2=0$ and hence also $E_i\psi_2=0$. Similarly $E_j\psi_1=0$ for all $j\in S_2$. Now consider the superposition $\ket{\psi_\theta}=(\ket{\psi_1}+e^{i\theta}\ket{\psi_2})/\sqrt{2}$ with an arbitrary phase $\theta$. It follows that $p(E_i|\psi_\theta)=\bra{\psi_\theta}E_i\ket{\psi_\theta}=\bra{\psi_1}E_i\ket{\psi_1}$ if $i\in S_1$ and $p(E_j|\psi_\theta)=\bra{\psi_2}E_j\ket{\psi_2}$ if $j\in S_2$, so $p(E_i|\psi_\theta)$ does not depend on $\theta$ for any $i\in I$. Hence, the POVM does not distinguish the states $\psi_\theta$, which contradicts the informational completeness of $\mathsf E$.

Consequently we cannot have $p(i|j) = 0$ for any $i$, and hence $\gamma_i=\gamma_{\max}$ for all $i$, implying $E_i A = \gamma_{\max}\, E_i$ for all $i$. Summing over $i$ we get $A = \gamma_{\max}\,\mathbb I$. Therefore, up to normalization, the only eigenvector of $\mathcal{F}$ corresponding to the maximum eigenvalue $1$ is the identity operator.

\end{document}